\tikzset{every picture/.style={line width=0.75pt}}
\newcommand{\restrict}[1]{\raise-.5ex\hbox{\ensuremath|}_{#1}}
\newcommand{\N}{\mathbb{N}}
\newcommand{\Z}{\mathbb{Z}}
\newcommand{\R}{\mathbb{R}}
\newcommand{\PPrb}[1]{\mathbb{P}\left[ #1\right]}
\newcommand{\EE}[1]{\mathbb{E}\left[ #1\right]}
\renewcommand{\epsilon}{\varepsilon}
\newcommand{\norm}[1]{\left\| #1\right\|}
\newcommand{\abs}[1]{\left| #1\right|}
\newcommand{\floor}[1]{\left\lfloor #1 \right\rfloor}
\newcommand{\ceil}[1]{\left\lceil #1 \right\rceil}
\newcommand{\hp}[2]{\left\langle #1 ,#2\right\rangle}
\newcommand{\1}{\mathds{1}}
\newcommand{\vect}{\operatorname{Vect}}
\newcommand{\domain}[1]{\ifthenelse{\equal{#1}{}}{\mathcal{D}}{\mathcal{D}\left(#1\right)}}
\theoremstyle{plain}
\newtheorem{Theorem}{Theorem}
\newtheorem{Proposition}[Theorem]{Proposition}
\theoremstyle{definition}
\theoremstyle{remark}
\newtheorem*{Remarks}{Remarks}
\begin{document}

\title{The Integrated Density of States of the 1D Discrete Anderson-Bernoulli Model at Rational Energies}
\author{Daniel S\'anchez-Mendoza}
\email{dsanchezmendoza@unistra.fr}
\affiliation{Université de Strasbourg, Institut de Recherche Mathématique Avancée UMR 7501, F-67000 Strasbourg, France.}
\date{\today}

\begin{abstract}
We show there is a countable dense set of energies at which the integrated density of states of the 1D discrete Anderson-Bernoulli model can be given explicitly and does not depend on the disorder parameter, provided the latter is above an energy-dependent threshold.
\end{abstract}

\maketitle

\section{Introduction and Results}
Much is known about the Anderson-Bernoulli model on $\Z$. In 1894, Delyon and Souillard\cite{Delyon} gave an elementary proof of the continuity of the integrated density of states (IDS). Spectral localization on the whole spectrum at any disorder was proven in 1987 by Carmona, Klein and Martinelli\cite{Carmona} using Furstenberg's theorem and multi-scale analysis. Later that same year Martinelli and Micheli\cite{Martinelli} gave a lower bound, uniform over the spectrum, on the asymptotic of the Lyapunov exponent as the disorder parameter goes to infinity; and in doing so showed the density of states measure is purely singular continuous if the disorder parameter is large enough. More recently, in 2004, Schulz-Baldes\cite{Schulz} showed the IDS exhibits a strong version of Lifshitz tails in which the Lifshitz constant can be computed at all spectral edges. In this note we add to the previously mentioned articles, and many others, by answering to some extent the questions: what value does the IDS assign to a given energy? and how does its plot look like? More precisely, we show that for every energy $x$ in a countable dense set (which will be called the set of rational energies), the IDS evaluated at $x$ can be given explicitly and it does not depend on the disorder parameter, whenever the latter is above an $x$-dependent critical value.

The operator we are concerned with is
\begin{align*}
H_{p,\zeta}\coloneqq-\Delta+\zeta V_p:\ell^2(\N)&\longrightarrow\ell^2(\N)\\
\phi&\longmapsto (H_{p,\zeta}\phi)(j)\coloneqq\left[2\phi(j)-\phi(j+1)-\phi(j-1)\right]+\zeta V_p(j)\phi(j),
\end{align*}
where the Laplacian has the Dirichlet boundary condition $\phi(0)=0$, the disorder parameter $\zeta$ is assumed to be positive, and the potential $\{V_p(j)\}_{j\in\N}$ is an independent and identically distributed (i.i.d.) sequence of random variables defined over a probability space $(\Omega,\mathcal{F},\mathbb{P})$ following a non-degenerate Bernoulli($p$) distribution, i.e. $\PPrb{V_p(j)=1}=p=1-\PPrb{V_p(j)=0}$.
Defining $H_{p,\zeta}$ on $\N$ instead of $\Z$ simplifies the proof of our main result and makes no difference on the IDS.

The almost sure spectrum of $H_{p,\zeta}$ is 
\begin{equation*}
    \sigma(H_{p,\zeta})=\sigma(-\Delta)+\{0,4\}=[0,4]\cup[\zeta,\zeta+4],
\end{equation*}
and its IDS, denoted $I_{p,\zeta}$, is given by the almost sure limit
\begin{equation*}
   I_{p,\zeta}(x)\coloneqq\lim_{L\to \infty}\frac{1}{L}\#\left\{\lambda\in\sigma\left(H_{p,\zeta}\restrict{\ell^2\left(\{1,\ldots,L\}\right)}\right)\,\middle| \,\lambda\leq x\right\},\qquad x\in\R.
\end{equation*}
We recall that $I_{p,\zeta}(x)$ is non-random, $x\mapsto I_{p,\zeta}(x)$ is a continuous distribution function, and $\zeta\mapsto I_{p,\zeta}(x)$ is decreasing.

Before stating our main result we define the functions
\begin{equation*}
    \beta(x)\coloneqq\frac{\pi}{2\arcsin\left(\sqrt{x}/2\right)},\qquad I_p^{\leq}(x)\coloneqq p^2\sum_{y=1}^{\infty}(1-p)^y\floor{\frac{y+1}{\beta(x)}},\qquad x\in(0,4),
\end{equation*}
where $\floor{\cdot}$ is the floor function. These functions have already appeared in Ref. \onlinecite{DSM}, where $I_p^{\leq}$ had a different series representation; we will see later that they coincide. We also define the set of \textit{rational energies}
\begin{equation*}
    R\coloneqq\left\{\beta^{-1}(b/a)\left(=4\sin^2\left(\frac{\pi a}{2 b}\right)\right)\,\middle|\,a,b\in\N,\,a<b\right\},
\end{equation*}
which is countable and dense in $[0,4]$.
\begin{Theorem}
For all $x\in R$ there is a critical $\zeta_c(x)\in(0,\infty)$ such that
\begin{equation*}
     \zeta\geq\zeta_c(x)\implies I_{p,\zeta}(x)=I_p^\leq(x).
\end{equation*}
If $x=\beta^{-1}(b/a)\in R$ with $a,b\in\N$, $a<b$, $\gcd(a,b)=1$ then $\zeta_c(x)\leq\max\left\{8,\frac{4 b}{\pi}+4\right\}$ and
\begin{equation*}
    I_p^\leq(x)=\frac{p^2}{1-(1-p)^b}\left(\frac{a(1-p)^b}{p}+\sum_{r=0}^{b-1}(1-p)^{r}\floor{\frac{a (r+1)}{b}}\right).
\end{equation*}
Moreover $\displaystyle \lim_{\zeta\to\infty}I_{p,\zeta}(x)=I_p^{\leq}(x)$ for all $x\in(0,4)$.
\end{Theorem}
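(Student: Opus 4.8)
The plan is to obtain the $\zeta\to\infty$ statement as a soft consequence of the first part of the Theorem, using only monotonicity and density, so that no new hard estimate is needed. Since $\zeta\mapsto I_{p,\zeta}(x)$ is decreasing, the limit
\[
I_{p,\infty}(x)\coloneqq\lim_{\zeta\to\infty}I_{p,\zeta}(x)=\inf_{\zeta>0}I_{p,\zeta}(x)
\]
exists for every $x\in\R$, and is non-decreasing in $x$ because each $I_{p,\zeta}$ is. For $x\in R$ the first part of the Theorem, with its finite threshold $\zeta_c(x)$, already gives $I_{p,\infty}(x)=I_p^{\leq}(x)$: the map $\zeta\mapsto I_{p,\zeta}(x)$ is constant and equal to $I_p^{\leq}(x)$ for $\zeta\geq\zeta_c(x)$. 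It remains to upgrade the identity $I_{p,\infty}=I_p^{\leq}$ from the dense set $R$ to the whole interval $(0,4)$.

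For this I would show that both $I_{p,\infty}$ and $I_p^{\leq}$ are right-continuous on $(0,4)$. The function $I_{p,\infty}=\inf_{\zeta>0}I_{p,\zeta}$ is an infimum of continuous functions, hence upper semicontinuous; and a non-decreasing upper semicontinuous function is automatically right-continuous, since for such a function $\lim_{x\downarrow x_0}f(x)=\inf_{x>x_0}f(x)$ is $\geq f(x_0)$ by monotonicity and $\leq f(x_0)$ by upper semicontinuity. For $I_p^{\leq}$, note that $1/\beta(x)=\tfrac{2}{\pi}\arcsin(\sqrt{x}/2)$ is continuous and strictly increasing from $0$ to $1$ on $(0,4)$, so each summand $x\mapsto\floor{(y+1)/\beta(x)}$ is a non-decreasing, integer-valued, right-continuous step function (the floor is right-continuous and is composed with a continuous increasing map). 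Since $\floor{(y+1)/\beta(x)}\leq y+1$ uniformly in $x$, the series $p^2\sum_{y\geq1}(1-p)^y\floor{(y+1)/\beta(x)}$ is dominated by the convergent series $p^2\sum_{y\geq1}(y+1)(1-p)^y$ and hence converges uniformly on $(0,4)$, so $I_p^{\leq}$ is right-continuous there as well.

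It then suffices to invoke the elementary fact that two right-continuous functions on $(0,4)$ which agree on a dense subset coincide: given $x\in(0,4)$, pick $x_n\in R$ with $x_n\downarrow x$ (possible because $R$ is dense in $[0,4]$) and pass to the limit, using right-continuity of both functions and the equality $I_{p,\infty}=I_p^{\leq}$ on $R$ established above. This yields $I_{p,\infty}(x)=I_p^{\leq}(x)$ for all $x\in(0,4)$, which is the assertion.

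Within this argument I do not expect a genuine obstacle once the first part of the Theorem is in hand; the only points demanding a little care are the two right-continuity claims — in particular the fact that a monotone upper semicontinuous function is right-continuous, and the uniform convergence of the defining series of $I_p^{\leq}$. The real difficulty lies upstream, in proving that $\zeta_c(x)<\infty$ for $x\in R$. A more self-contained route to the present statement would argue directly that, as $\zeta\to\infty$, $H_{p,\zeta}$ decouples on $[0,4]$ into Dirichlet Laplacians supported on the maximal runs of sites where $V_p=0$ — a run of length $y$ contributing exactly $\floor{(y+1)/\beta(x)}$ eigenvalues $\leq x$ when $x\notin R$ — with the ergodic theorem supplying the run-length frequencies $p^2(1-p)^y$; but then one has to control this decoupling uniformly in the finite volume $L$, a step the route through the Theorem sidesteps entirely.
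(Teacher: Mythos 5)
Your proposal proves only the final sentence of the Theorem, and it does so \emph{conditionally} on the first two assertions, which are the actual substance of the result. The conditional deduction itself is sound: monotonicity in $\zeta$ gives the existence of $I_{p,\infty}(x)=\inf_{\zeta>0}I_{p,\zeta}(x)$; an infimum of continuous non-decreasing functions is upper semicontinuous and non-decreasing, hence right-continuous; $I_p^\leq$ is right-continuous because its summands $x\mapsto\floor{(y+1)/\beta(x)}$ are right-continuous and the series converges uniformly by the domination $\floor{(y+1)/\beta(x)}\leq y+1$; and two right-continuous functions agreeing on the dense set $R$ agree on $(0,4)$. This is a genuinely softer route to the limit statement than the paper's, which instead sandwiches $I_{p,\zeta}(x)$ between $I_p^\leq(x)$ and $p^2\sum_{y\geq1}(1-p)^y\left(\ceil{\frac{y+1}{\beta(x)}+\frac{4}{\pi(\zeta-4)}}-1\right)$ and lets $\zeta\to\infty$ using dominated convergence and the right continuity of $\floor{\cdot}$; given the first part of the Theorem, your argument would be an acceptable replacement for that step.

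The genuine gap is that the existence of a finite $\zeta_c(x)$ for $x\in R$, the bound $\zeta_c(x)\leq\max\left\{8,\frac{4b}{\pi}+4\right\}$, and the closed-form expression for $I_p^\leq(x)$ are all taken as given and never established. In the paper these require: a lower bound $I_p^\leq(x)\leq I_{p,\zeta}(x)$ for all $\zeta\geq0$, obtained by Cauchy eigenvalue interlacing after deleting the sites where $V_p=1$; an upper bound obtained by doubling the potential sites, comparing through the map $T$ and the Min-Max principle, and Neumann-type bracketing, which reduces the count to the spectra of $-\Delta_{Y+2}+\left(\frac{\zeta}{2}-1\right)A_{Y+2}$; and, crucially, the quantitative eigenvalue estimate $\mu_{k,n+2}(t)\geq4\sin^2\left(\frac{\pi}{2(n+1)}\left[k-\frac{2}{\pi(t-1)}\right]\right)$ of Proposition~\ref{PEV}, which is exactly what makes the threshold finite and yields the $\frac{4b}{\pi}+4$ bound for $\gcd(a,b)=1$, together with an elementary geometric-series resummation for the finite formula. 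Your closing sketch (decoupling into Dirichlet Laplacians on the runs of zeros as $\zeta\to\infty$) is precisely the step that must be made quantitative in $\zeta$ and uniform in the volume, and you do not carry it out; moreover, even done rigorously it would only recover the limit statement, and only at energies where $(y+1)/\beta(x)$ avoids integers, i.e.\ off $R$ --- whereas the point of the Theorem is the exact identity at the rational energies with an explicit finite threshold, where the distinction between counting eigenvalues $\leq x$ and $<x$ is exactly the delicate issue the paper's $\zeta$-dependent estimate resolves.
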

\begin{Remarks}\leavevmode
\begin{enumerate}
\item We have excluded $x=0$ from the definition of $R$ and $I_p^\leq$ to avoid the singularity of $\beta$, however $I_{p,\zeta}(0)=0$ for all $\zeta\geq0$. We have also excluded $x=4$ since $I_{p,\zeta}(4)=1-p$ for $\zeta\geq4$ but
\begin{equation*}
    p^2\sum_{y=1}^{\infty}(1-p)^y\floor{\frac{y+1}{1}}=(1-p)(1+p)>1-p.
\end{equation*}
\item The unitary map $(U\phi)(j)=(-1)^j\phi(j)$ transforms $H_{p,\zeta}$ as $U H_{p,\zeta} U^*=4+\zeta-\left(-\Delta+\zeta[1-V_p]\right)$. Since $\{1-V_p(j)\}_{j\in\N}$ is an i.i.d. Bernoulli($1-p$) potential, we have 
\begin{equation*}
    I_{p,\zeta}(x)=1-I_{1-p,\zeta}(4+\zeta-x).
\end{equation*}
This allows us to derive an analogous statement for the rational energies of $[\zeta,\zeta+4]\subseteq\sigma(H_{p,\zeta})$.
\item For the cases $a=1$ and $a=b-1$ there is a stronger upper bound on $\zeta_c(\beta^{-1}(b/a))$. From Corollaries 2 and 6 of Ref. \onlinecite{DSM} we have for all $b\in\N\setminus\{1\}$
\begin{equation*}
    \zeta_c(\beta^{-1}(b/1))\leq4\qquad\text{and}\qquad\zeta_c\left(\beta^{-1}\left(b/(b-1)\right)\right)\leq\beta^{-1}\left(b/(b-1)\right).
\end{equation*}
\item The limit $\lim_{\zeta\to\infty}I_{p,\zeta}=I_p^{\leq}$ is only point-wise since $I_{p,\zeta}$ is continuous for every $\zeta$ while $I_p^{\leq}$ is only right-continuous. In particular, there cannot be a finite $\zeta$ for which $I_{p,\zeta}(x)=I_p^\leq(x)$ for all $x\in R$ since this would imply the equality for all $x\in(0,4)$ and therefore uniform convergence.
\end{enumerate}
\end{Remarks}
We can use Theorem 1 and Remark 3 to obtain a granular idea of the plot of $I_{p,\zeta}$ for any given $\zeta\geq8$. Indeed, if $\zeta\geq8$ and we define $n=n(\zeta)\coloneqq\floor{\frac{\pi(\zeta-4)}{4}}\in\N$ and the set of energies
\begin{align*}
    R_n\coloneqq &\left\{\beta^{-1}(b/a)\,\middle|\,a,b\in\N,\,\,a<b\leq n\right\}\cup\left\{\beta^{-1}(b/1)\,\middle|\,b\in\N\setminus\{1\}\right\}\cup\left\{\beta^{-1}\left(b/(b-1)\right)\,\middle|\,b\in\N\setminus\{1\}\right\}\subseteq R,
\end{align*}
we have $I_{p,\zeta}(x)=I_p^\leq(x)$ for all $x\in R_n$, as shown in Figure \ref{Fig1}. The first set in the definition of $R_n$ is finite while the other two are countable and accumulate towards $0$ and $4$ respectively. Naturally, as $\zeta$ increases so does $n$ and $R_n\uparrow R$.
\begin{figure}[H]
    \centering
    \includegraphics[width=14cm]{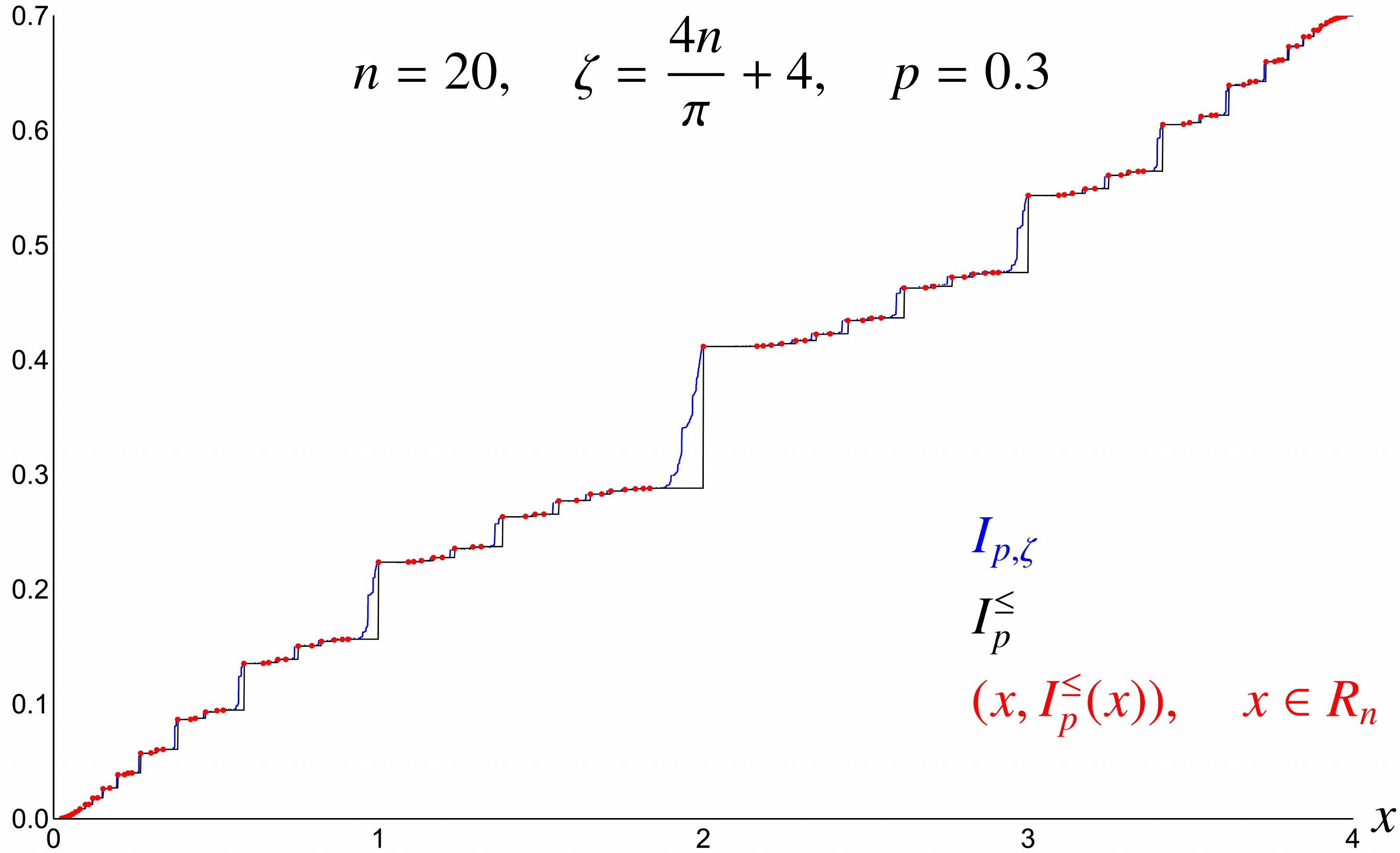}
    \caption{Plot of $I_{p,\zeta}$, $I_{p}^\leq$ and the points $\{(x,I_p^\leq(x))\,|\,x\in R_n\}$ for $n=20$, $\zeta=\frac{4 n}{\pi}+4$, $p=0.3$. $I_{p,\zeta}$ was computed numerically from a $10^5\times10^5$ matrix.}
    \label{Fig1}
\end{figure}
The proof of Theorem 1 consists of bounding $I_{p,\zeta}$, from above and below, by the IDS of a direct sum of i.i.d. random operators whose spectra is explicit or we can approximate very well. This is done by applying a modified Dirichlet-Neumann bracketing to the finite volume restriction of $H_{p,\zeta}$, just as in Ref. \onlinecite{DSM}. We deviate form the aforementioned article on the estimates of the eigenvalues relevant to the upper bound of $I_{p,\zeta}$. There, we used a $\zeta$-independent estimate which led to Remark 3 (see above), while here we use a $\zeta$-dependent one, namely Proposition \ref{PEV}. It is worth noting that a stronger upper bound on $\zeta_c(x)$ than the one given in Theorem 1 may be achieved by refining Proposition \ref{PEV}. In particular, a more careful treatment of equation \eqref{EVE2} and its solutions may lead to an upper bound on $\zeta_c(\beta^{-1}(b/a))$ that depends on $a$ and not just $b$. 

\section{Proof of Theorem 1}
We start by giving all the necessary definitions and notations.

We define two sequences of random variables
\begin{align*}
    L_1&\coloneqq\min\{j>0\,|\,V_p(j)=1\},& Y_1&\coloneqq L_1-1,\\
    L_{n+1}&\coloneqq \min\{j>L_n\,|\,V_p(j)=1\},& Y_{n+1}&\coloneqq L_{n+1}-L_n-1,
\end{align*}
which give respectively, the position of the $1$'s of $V_p$ and the number of $0$'s between them, as shown in Figure \ref{Fig2}. The $Y_i$ are i.i.d. following a geometric distribution $\PPrb{Y_i=y}=(1-p)^y p$ for $y\in \N\cup\{0\}$, and by definition $L_n=n+\sum_{i=1}^n Y_i$. By applying the Law of Large Numbers we obtain $\lim_{n\to\infty}\frac{L_n}{n}=1+\EE{Y_1}=\frac{1}{p}$, and therefore we can use the random subsequence $\{L_n\}_{n\in\N}$ in the definition of $I_{p,\zeta}(x)$.
\begin{figure}[H]
    \centering
    \includegraphics{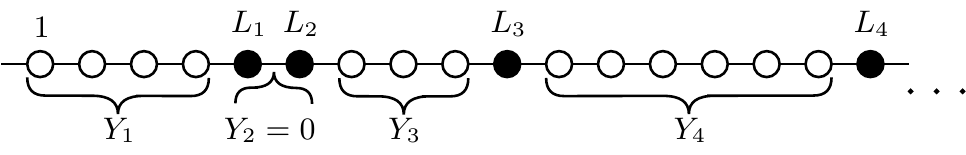}
    \caption{A possible realization of $H_{p,\zeta}$. The Laplacian is given by the graph structure and the potential by the color of the vertices. White (resp. black) vertices represent points where $V_p(j)=0$ (resp. $V_p(j)=1$). Reproduced from Ref. \onlinecite{DSM}, with the permission of AIP Publishing.}
    \label{Fig2}
\end{figure}

We order the eigenvalues of any self-adjoint $n$-dimensional operator $O$ increasingly allowing for multiplicities
\begin{equation*}
    \lambda_1(O)\leq\lambda_2(O)\leq\cdots\leq\lambda_n(O),
\end{equation*}
and introduce the $n\times n$ matrices $A_{n}(i,j)\coloneqq\delta_{1,i}\delta_{1,j}+\delta_{n,i}\delta_{n,j}$ and 
\begin{equation*}
-\Delta_{n}\coloneqq\begin{pmatrix}
2 & -1 & & \\
-1 & \ddots & \ddots & \\
& \ddots & \ddots & -1 \\
 & & -1 & 2 \end{pmatrix},\qquad\sigma(-\Delta_{n})=\left\{\lambda_k(-\Delta_{n})= 4 \sin^2\left(\frac{\pi k}{2(n+1)}\right)\,\middle|\,k=1,\ldots,n\right\}.
\end{equation*}

We identify $H_{p,\zeta}\restrict{\ell^2\left(\{1,\ldots,L_n\}\right)}$ with
$-\Delta_{L_n}+\zeta V_p$ (where the restriction of $V_p$ is implicit) and remark that the continuity of $x\mapsto I_{p,\zeta}(x)$ means it can be computed by counting eigenvalues less or equal ($\leq$) or less ($<$) than $x$:
\begin{equation*}
    I_{p,\zeta}(x)=\lim_{n\to \infty}\frac{1}{L_n}\#\left\{\lambda\in\sigma\left(-\Delta_{L_n}+\zeta V_p\right)\,\middle| \,\lambda\leq x\right\}=\lim_{n\to \infty}\frac{1}{L_n}\#\left\{\lambda\in\sigma\left(-\Delta_{L_n}+\zeta V_p\right)\,\middle| \,\lambda< x\right\}.
\end{equation*}

The lower bound of $I_{p,\zeta}$ just requires an application of the Cauchy Eigenvalue Interlacing Theorem to $-\Delta_{L_n}+\zeta V_p$. Indeed, if we delete from $-\Delta_{L_n}+\zeta V_p$ the $j$-th row and $j$-th column for all $j\in\{1,\ldots,L_n\}$ such that $V_p(j)=1$, the resulting sub-matrix is $\bigoplus_{i=1}^n-\Delta_{Y_i}$ and therefore
\begin{equation*}
    \lambda_k(-\Delta_{L_n}+\zeta V_p)\leq\lambda_k\left(\bigoplus_{i=1}^n-\Delta_{Y_i}\right),\qquad k=1,\ldots,\sum_{i=1}^n Y_i.
\end{equation*}
By counting eigenvalues less or equal ($\leq$) than $x$ and applying the Law of Large Numbers we obtain the lower bound
\begin{align}\label{LB1}
I_{p,\zeta}(x)&\geq\lim_{n\to \infty}\frac{1}{L_n}\#\left\{\lambda\in\sigma\left(\bigoplus_{i=1}^n-\Delta_{Y_i}\right)\,\middle| \,\lambda\leq x\right\}\nonumber\\
&=p\,\EE{\#\left\{\lambda\in\sigma\left(-\Delta_{Y_1}\right)\,\middle| \,\lambda\leq x\right\}},\qquad x\in\R,\,\zeta\geq0.
\end{align}
The limit on \eqref{LB1} is the definition given to $I_p^\leq$ in Ref. \onlinecite{DSM}, we can easily check that they coincide for $x\in(0,4)$:
\begin{align*}
    \lim_{n\to \infty}\frac{1}{L_n}\#\left\{\lambda\in\sigma\left(\bigoplus_{i=1}^n-\Delta_{Y_i}\right)\,\middle| \,\lambda\leq x\right\}&=p\,\EE{\#\left\{\lambda\in\sigma\left(-\Delta_{Y_1}\right)\,\middle| \,\lambda\leq x\right\}}\\
    &=p\sum_{y=0}^\infty\PPrb{Y_1=y} \#\left\{\lambda\in\sigma\left(-\Delta_{y}\right)\,\middle| \,\lambda\leq x\right\}\\
    &=p^2\sum_{y=1}^\infty(1-p)^y \max\{k\in\N\,|\,k\leq y,\, \lambda_k(-\Delta_y)\leq x\}\\
    &=p^2\sum_{y=1}^\infty(1-p)^y \max\left\{k\in\N\,\middle|\,k\leq\min\left\{y,\frac{y+1}{\beta(x)}\right\}\right\}\\
    &=p^2\sum_{y=1}^\infty(1-p)^y\floor{\frac{y+1}{\beta(x)}}.
\end{align*}
Hence we have shown
\begin{equation}\label{LB2}
   I_p^\leq(x)\leq I_{p,\zeta}(x),\qquad x\in(0,4),\,\zeta\geq0.
\end{equation}

The upper bound is a bit more involved. From $-\Delta_{L_n}+\zeta V_p$ we define a new (dimensionally larger) operator $-\Delta_{L_n+n}+\frac{\zeta}{2}V'$ where $V'$ is constructed by doubling each point at which $V_p(j)=1$ while maintaining the $Y_i$'s, as shown in Figure \ref{Fig3}. To be precise,
\begin{equation*}
    V'(j)\coloneqq\sum_{k=1}^\infty\left(\delta_{L_k+k-1,j}+\delta_{L_k+k,j}\right)\quad\text{whereas}\quad V_p(j)=\sum_{k=1}^\infty \delta_{L_k,j}.
\end{equation*}
In order to compare these two operators we define the linear map
$T:\ell^2\left(\{1,\ldots,L_n\}\right)\longrightarrow\ell^2\left(\{1,\ldots,L_n+n\}\right)$,
\begin{equation*}
    (T\phi)(j)\coloneqq\begin{cases}\phi(j-k),&\text{if }\,L_k+k+1\leq j \leq L_{k+1}+(k+1)-1,\\
    \phi(L_k),&\text{if }\,j=L_k+k,
    \end{cases}
\end{equation*}
with the convention $L_0=0$, which assigns to $T\phi$ the same values of $\phi$ according to Figure \ref{Fig3}. For all $\phi\in\ell^2\left(\{1,\ldots,L_n\}\right)$ the map $T$ satisfies
\begin{align*}
    \hp{T\phi}{\left(-\Delta_{L_n+n}+\frac{\zeta}{2}V'\right)T\phi}&=\hp{\phi}{(-\Delta_{L_n}+\zeta V_p)\phi},\\
    \norm{T\phi}&\geq\norm{\phi}.\qquad (T\text{ is injective})
\end{align*}
Let $\phi_i$ be the normalized eigenvector associated to $\lambda_i\big(-\Delta_{L_n}+\zeta V_p\big)$. Then, by the Min-Max Principle we have for $k\leq L_n$
\begin{align*}
    \lambda_k\left(-\Delta_{L_n+n}+\frac{\zeta}{2}V'\right)&\leq\sup_{\phi\in\vect\{\phi_i,\ldots\phi_k\}\setminus\{0\}}\frac{\hp{T\phi}{(-\Delta_{L_n+n}+\frac{\zeta}{2}V')T\phi}}{\norm{T\phi}^2}\\
    &\leq\sup_{\phi\in\vect\{\phi_i,\ldots\phi_k\}\setminus\{0\}}\frac{\hp{\phi}{(-\Delta_{L_n}+\zeta V_p)\phi}}{\norm{\phi}^2}=\lambda_k\left(-\Delta_{L_n}+\zeta V_p\right).
\end{align*}
\begin{figure}[H]
    \centering
    \includegraphics{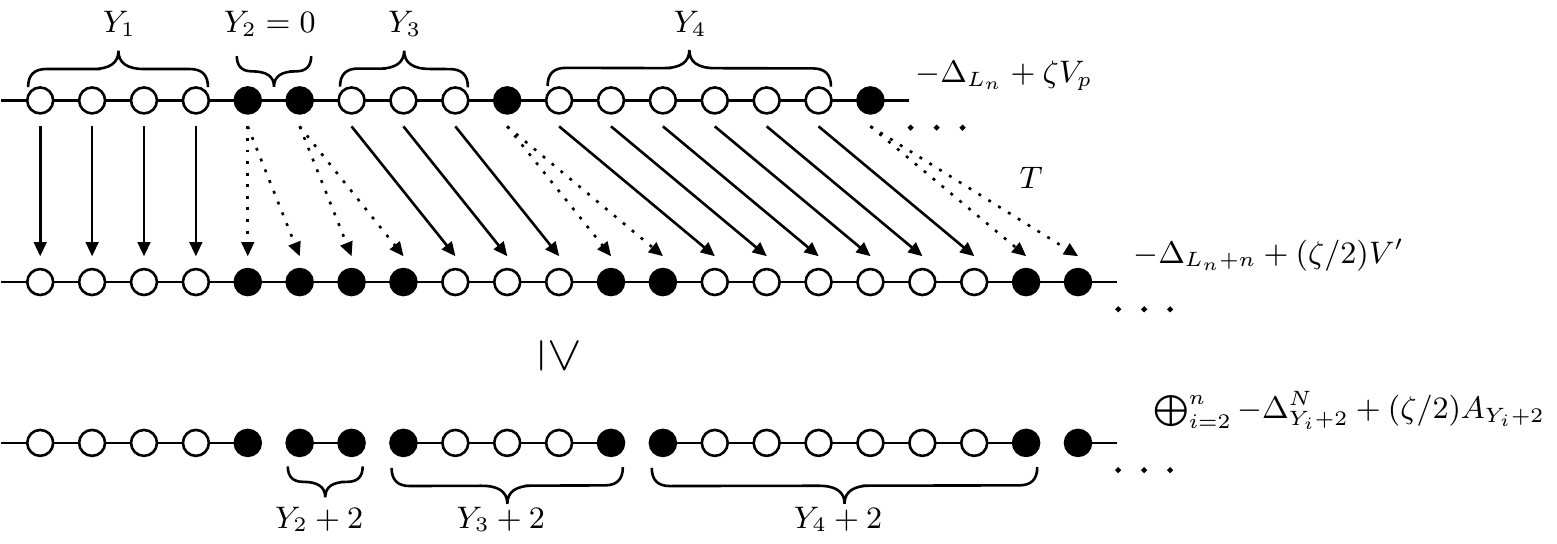}
    \caption{The first two rows show the of construction of $V'$ and the action of $T$. From the second to the third row we have deleted edges, which lowers the operator and decomposes it into a direct sum. Partially reproduced from Ref. \onlinecite{DSM}, with the permission of AIP Publishing}
    \label{Fig3}
\end{figure}

We can construct form $-\Delta_{L_n+n}+\frac{\zeta}{2}V'$ an operator with even lower eigenvalues by disconnecting each $Y_i$ (except $Y_1$) together with its two adjacent points at the cost of having Neumann boundary conditions on the Laplacians, as shown in Figure \ref{Fig3}. Since $Y_1$ has no point to its left and the right-most point ($j=L_n+n$) ends up isolated, we have a boundary term of dimension $Y_1+2$. This, together with the previous lower bound on $\lambda_k(-\Delta_{L_n}+\zeta V_p)$ and the fact that we can write the Neumann Laplacian as $-\Delta^N_n=-\Delta_n-A_n$, gives
\begin{equation*}
    \lambda_k\left((\text{Boundary term})\oplus\bigoplus_{i=2}^n-\Delta_{Y_i+2}+\left(\frac{\zeta}{2}-1\right) A_{Y_i+2}\right)\leq\lambda_k(-\Delta_{L_n}+\zeta V_p),\qquad k=1,\ldots,L_n.
\end{equation*}
Counting eigenvalues less ($<$) than $x$ we obtain
\begin{align}\label{UB1}
    I_{p,\zeta}(x)&\leq \lim_{n\to \infty}\frac{1}{L_n}\#\left\{\lambda\in\sigma\left(\bigoplus_{i=2}^n-\Delta_{Y_i+2}+\left(\frac{\zeta}{2}-1\right) A_{Y_i+2}\right)\,\middle| \,\lambda < x\right\}+\lim_{n\to \infty}\frac{Y_1+2}{L_n}\nonumber\\
    &=p\,\EE{\#\left\{\lambda\in\sigma\left(-\Delta_{Y_1+2}+\left(\frac{\zeta}{2}-1\right) A_{Y_1+2}\right)\,\middle| \,\lambda< x\right\}},\qquad x\in\R,\,\zeta\geq0.
\end{align}

To further bound \eqref{UB1} we need to estimate the eigenvalues that appear on it; which is the purpose of the next proposition. These eigenvalues are always simple since their eigenvectors satisfy a second order difference equation with two boundary conditions.
\begin{Proposition}\label{PEV}
Let $n\in\N\cup\{0\}$ and define  $\mu_{k,n+2}(t)\coloneqq\lambda_k(-\Delta_{n+2}+t A_{n+2})$. If $t\geq3$ then:
\begin{enumerate}[label=\normalfont\roman*)]
    \item $0<\lambda_{k}(-\Delta_{n+2})\leq\mu_{k,n+2}(t)\leq\lambda_{k}(-\Delta_{n})<4$ for $1\leq k\leq n$.
    \item $4\leq\mu_{n+1,n+2}(t)< \mu_{n+2,n+2}(t)$.
    \item $\displaystyle \mu_{k,n+2}(t)\geq4\sin^2\left(\frac{\pi}{2(n+1)}\left[k-\frac{2}{\pi(t-1)}\right]\right)$ for $1\leq k\leq n$.
\end{enumerate}
\end{Proposition}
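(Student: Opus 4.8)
The plan is to obtain (i) and (ii) from soft variational arguments and to obtain (iii) by writing out and analysing the secular equation of $-\Delta_{n+2}+tA_{n+2}$. For (i): since $tA_{n+2}\ge0$ one has $\lambda_k(-\Delta_{n+2})\le\mu_{k,n+2}(t)$ for every $t\ge0$, and $\lambda_k(-\Delta_{n+2})=4\sin^2\!\big(\tfrac{\pi k}{2(n+3)}\big)>0$ for $k\ge1$; for the upper bound I would embed the eigenvectors $v^{(1)},\dots,v^{(n)}$ of $-\Delta_n$ into $\ell^2(\{1,\dots,n+2\})$ by supporting them on the interior sites $\{2,\dots,n+1\}$ and setting the value $0$ at $1$ and $n+2$. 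On the span of the first $k$ of these, $\langle\phi,tA_{n+2}\phi\rangle=0$ while $\langle\phi,-\Delta_{n+2}\phi\rangle$ equals the Dirichlet form of $-\Delta_n$, so the Min-Max Principle gives $\mu_{k,n+2}(t)\le\lambda_k(-\Delta_n)=4\sin^2\!\big(\tfrac{\pi k}{2(n+1)}\big)<4$ for $1\le k\le n$ (this part needs no lower bound on $t$). For (ii): the strict inequality $\mu_{n+1,n+2}(t)<\mu_{n+2,n+2}(t)$ is the simplicity of the spectrum of the irreducible Jacobi matrix $-\Delta_{n+2}+tA_{n+2}$ recorded above the statement; for $\mu_{n+1,n+2}(t)\ge4$ I would apply $\lambda_{n+1}(M)=\max_{\dim W=2}\min_{0\ne\phi\in W}\langle\phi,M\phi\rangle/\norm{\phi}^2$ with $W=\vect\{e_1,e_{n+2}\}$, where a direct computation gives $\langle\phi,(-\Delta_{n+2}+tA_{n+2})\phi\rangle=(2+t)\norm{\phi}^2$ on $W$ when $n\ge1$, hence $\mu_{n+1,n+2}(t)\ge2+t\ge4$ already for $t\ge2$; the single remaining case $n=0$, where $-\Delta_2+tA_2$ has eigenvalues $1+t<3+t$, is checked by hand and is where $t\ge3$ enters.

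For (iii), the starting point is that, extending an eigenvector $\phi$ with eigenvalue $\mu$ by the three-term recursion to the virtual sites $0$ and $n+3$, the eigenvalue problem becomes that recursion on $\{1,\dots,n+2\}$ together with the boundary conditions $\phi(0)=-t\phi(1)$ and $\phi(n+3)=-t\phi(n+2)$. For $\mu=4\sin^2(\theta/2)\in(0,4)$ I would write $\phi(j)=\sin(j\theta+\gamma)$: the first boundary condition fixes $\gamma$ modulo $\pi$, and a short trigonometric manipulation turns the second into the secular equation
\begin{equation*}
g(\theta)\coloneqq(n+2)\theta-2\arctan\!\left(\tfrac{t-1}{t+1}\tan\tfrac{\theta}{2}\right)\in\pi\Z .
\end{equation*}
A quick estimate of $g'$ shows $g'>0$ on $[0,\pi)$ when $t\ge3$ and $n\ge1$, so $g$ is an increasing bijection of $[0,\pi)$ onto $[0,(n+1)\pi)$; combined with (i) and (ii) — which together say exactly $n$ eigenvalues lie in $(0,4)$ — this identifies $\mu_{k,n+2}(t)=4\sin^2(\theta_k/2)$ with $\theta_k$ the unique root of $g(\theta_k)=\pi k$, $1\le k\le n$. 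Writing $h(\theta)\coloneqq g(\theta)-(n+1)\theta=\theta-2\arctan\!\big(\tfrac{t-1}{t+1}\tan\tfrac{\theta}{2}\big)$, the asserted bound in (iii) is exactly $h(\theta_k)\le\frac{2}{t-1}$. One checks $h(0)=h(\pi^-)=0$ and that $h'$ vanishes on $(0,\pi)$ only at $\theta^*=2\arctan\sqrt{(t+1)/(t-1)}$, so, using $\arctan(1/x)=\tfrac{\pi}{2}-\arctan x$,
\begin{equation*}
\max_{[0,\pi]}h=h(\theta^*)=\theta^*-2\arctan\sqrt{\tfrac{t-1}{t+1}}=\pi-4\arctan\sqrt{\tfrac{t-1}{t+1}}.
\end{equation*}

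It then remains to verify the scalar inequality $\pi-4\arctan\sqrt{(t-1)/(t+1)}\le\frac{2}{t-1}$ for all $t\ge3$. Setting $a=\sqrt{(t-1)/(t+1)}$ and using $\arctan a-\tfrac{\pi}{4}=\arctan\tfrac{a-1}{1+a}$, then the elementary bound $\tan x\ge x$, then the identity $\tfrac{1-a}{1+a}=\tfrac{2}{(t+1)(1+a)^2}$, the inequality reduces to $a\ge\tfrac{t-2}{t+1}$, which after squaring is just $4t\ge5$, true for $t\ge3$. Finally, since for $t\ge3$ and $k\le n$ the point $\tfrac{\pi k}{n+1}-\tfrac{2}{(n+1)(t-1)}$ lies in $(0,\pi)$ where $4\sin^2(\cdot/2)$ is increasing, $h(\theta_k)\le\tfrac{2}{t-1}$ gives $\mu_{k,n+2}(t)=4\sin^2(\theta_k/2)\ge4\sin^2\!\big(\tfrac{\pi}{2(n+1)}[k-\tfrac{2}{\pi(t-1)}]\big)$.

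Parts (i) and (ii) are routine; the real work, and the main obstacle, is (iii). There, two things need care: first, producing the closed-form secular equation $g(\theta)\in\pi\Z$ and matching the order of its roots $\theta_1<\dots<\theta_n$ to the order $\mu_{1,n+2}(t)\le\dots\le\mu_{n,n+2}(t)$ — this is precisely where (i) and (ii) are used, to guarantee that no eigenvalue in $(0,4)$ is missed or counted twice; and second, locating the maximiser $\theta^*$ of the phase defect $h$ and pushing the final $\arctan$ estimate all the way down to $4t\ge5$.
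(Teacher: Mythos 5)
Your proposal is correct, and while (i) and (ii) coincide with the paper's argument — the lower bound from $tA_{n+2}\ge0$, an upper bound that is just the variational form of the Cauchy interlacing step (test vectors vanishing at the two boundary sites), the Max--Min computation on $\vect\{e_1,e_{n+2}\}$ giving $2+t$, and the explicit $n=0$ spectrum $\{1+t,3+t\}$, which is indeed the only place $t\ge3$ is genuinely forced — your part (iii) takes a genuinely different route. The paper expands $\det(-\Delta_{n+2}+tA_{n+2}-x)$ in Chebyshev polynomials of the second kind, rewrites the eigenvalue condition as $\tan((n+1)\theta)=-f_t(\theta)$, and bounds the phase correction by $\arctan f_t(\theta)\le f_t(\theta)\le\sup_\theta f_t(\theta)=\tfrac{2}{t-1}$. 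You instead quantize the boundary phase directly, with no determinant: the virtual-site conditions $\phi(0)=-t\phi(1)$, $\phi(n+3)=-t\phi(n+2)$ are the correct reformulation, and your secular equation checks out — with $u=\tan(\theta/2)$ one has $\tan\bigl(\tfrac\theta2+\arctan(\tfrac{t-1}{t+1}u)\bigr)=\tfrac{2tu}{(t+1)-(t-1)u^2}=\tfrac{t\sin\theta}{1+t\cos\theta}$, which is exactly the phase fixed by the left boundary condition, so the right condition becomes $g(\theta)\in\pi\Z$. Your monotonicity estimate $g'\ge(n+2)-\tfrac{t+1}{t-1}\ge n>0$ (for $t\ge3$, $n\ge1$) makes the labelling $g(\theta_k)=\pi k$, $\mu_{k,n+2}(t)=4\sin^2(\theta_k/2)$ cleaner than the paper's existence/uniqueness argument via the period of the tangent, and instead of a uniform bound on $f_t$ you compute the exact maximum of the phase defect, $\max h=\pi-4\arctan\sqrt{(t-1)/(t+1)}$, and your chain reducing $\pi-4\arctan\sqrt{(t-1)/(t+1)}\le\tfrac{2}{t-1}$ to $4t\ge5$ is valid. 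In exchange for losing the self-contained Chebyshev computation of the paper (which is entirely mechanical), your route isolates the sharp value of the phase correction — so it could be used to replace $\tfrac{2}{\pi(t-1)}$ in (iii) by the slightly smaller $1-\tfrac4\pi\arctan\sqrt{(t-1)/(t+1)}$ — and it makes transparent that the analytic part of (iii) holds well below $t=3$, the threshold being dictated by (ii). Both arguments deliver the stated bound, so this is a sound alternative proof rather than a gap.
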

\begin{proof}\leavevmode
\begin{enumerate}[label=\roman*)]
    \item The lower bound on $\mu_{k,n+2}(t)$ follows from $t A_{n+2}\geq0$, while the upper one follows from the Cauchy Eigenvalue Interlacing Theorem by deleting from $-\Delta_{n+2}+t A_{n+2}$ the rows (and columns) where $t$ appears.
    \item For $n=0$ we compute directly 
    \begin{equation*}
        \sigma(-\Delta_{2}+t A_{2})=\sigma\begin{pmatrix}2+t&-1\\-1&2+t\end{pmatrix}=\{1+t,3+t\}.
    \end{equation*}
    For $n\geq1$ we apply the Min-Max Principle (Max-Min in this case):
    \begin{align*}
        \mu_{n+1,n+2}(t)&\geq\min_{\substack{\phi\in\vect\{e_1,e_{n+2}\}\\\norm{\phi}=1}}\hp{\phi}{(-\Delta_{n+2}+t A_{n+2})\phi}\\
        &=\min_{\substack{\phi\in\vect\{e_1,e_{n+2}\}\\\norm{\phi}=1}}(2+t)\norm{\phi}^2=2+t\geq4,
    \end{align*}
    where $e_i$ denotes the canonical basis of $\ell^2\left(\{1,\ldots,n+2\}\right)$.
    \item We recall that the characteristic polynomial of $-\Delta_n$ can be written as
    \begin{equation*}
        \det(-\Delta_n-x)=(-1)^n U_n\left(\frac{x-2}{2}\right),
    \end{equation*}
    where $U_n$ is the $n$-th Chebyshev polynomial of the second kind. For completeness we list here the properties of $U_n$ (see Section 1.2.2 of Ref. \onlinecite{Chebyshev}) that we will need:
    \begin{itemize}
        \item Recurrent definition:
        \begin{equation*}
            U_0(x)\coloneqq1,\quad  U_1(x)\coloneqq2x,\quad U_{n+1}(x)\coloneqq2x U_n(x)-U_{n-1}(x).
        \end{equation*}
        \item Parity:
        \begin{equation*}
            U_n(-x)=(-1)^n U_n(x).
        \end{equation*}
        \item Image of $(-1,1)$:
        \begin{equation*}
            U_n(\cos\theta)=\frac{\sin((n+1)\theta)}{\sin\theta}.
        \end{equation*}
    \end{itemize}
    Now we start with the proof. A straight forward computation shows that we can expand the characteristic polynomial of $-\Delta_{n+2}+t A_{n+2}$ as
    \begin{multline*}
    \det(-\Delta_{n+2}+t A_{n+2}-x)\\
    \begin{aligned}
        &=(2+t-x)^2\det(-\Delta_{n}-x)-2(2+t-x)\det(-\Delta_{n-1}-x)+\det(-\Delta_{n-2}-x)\\&=(-1)^n\left[(2+t-x)^2U_n\left(\frac{x-2}{2}\right)+2(2+t-x)U_{n-1}\left(\frac{x-2}{2}\right)+U_{n-2}\left(\frac{x-2}{2}\right)\right].
    \end{aligned}
    \end{multline*}
    By introducing the change of variable $x'\coloneqq \frac{x-2}{2}$ and using twice the recurrent definition of $U_n$, the previous expression can be reduced to
    \begin{equation*}
         \det(-\Delta_{n+2}+t A_{n+2}-x)=(-1)^n\left[(t^2-1)U_n(x')-2(t-x')U_{n+1}(x')\right].
    \end{equation*}
    By i) and ii), $-\Delta_{n+2}+t A_{n+2}$ has exactly $n$ simple eigenvalues in $(0,4)$. With this in mind, we introduce a parameter $\theta\in(0,\pi)$ and notice, by evaluating the characteristic polynomial at $x'=-\cos(\theta)$, that $ 4\sin^2(\theta/2)\in\sigma(-\Delta_{n+2}+t A_{n+2})$ if and only if
    \begin{equation}\label{EVE1}
        (t^2-1)\sin((n+1)\theta)=-2(t+\cos\theta)\sin((n+2)\theta).
    \end{equation}
    The condition $t\geq3$ and the trigonometric identity
    \begin{equation*}
        \sin((n+2)\theta)=\sin((n+1)\theta)\cos\theta+\cos((n+1)\theta)\sin\theta
    \end{equation*}
    guaranty that there is no solution to \eqref{EVE1} in the set $\frac{\pi}{n+1}\Z$, hence we can rewrite the equation as
    \begin{align}
        \frac{\sin((n+2)\theta)}{\sin((n+1)\theta)}&=-\frac{t^2-1}{2(t+\cos\theta)},\nonumber\\
      \cos\theta+\cot((n+1)\theta)\sin\theta&=-\frac{t^2-1}{2(t+\cos\theta)},\nonumber\\
      \label{EVE2}\tan((n+1)\theta)&=-\frac{2(t+\cos\theta)\sin\theta}{t^2+2t\cos\theta+\cos(2\theta)}.
    \end{align}
    To abbreviate we define
    \begin{equation*}
        f_t(\theta)\coloneqq\frac{2(t+\cos\theta)\sin\theta}{t^2+2t\cos\theta+\cos(2\theta)},\qquad\theta\in(0,\pi),
    \end{equation*}
    and remark that $t\geq3$ implies $0<f_t(\theta)<\infty$.
    
    Applying arc-tangent to \eqref{EVE2} and considering that i) actually constrains $\theta$ to be in $\left[\frac{\pi }{n+3},\frac{\pi n}{n+1}\right]$, we conclude for $k=1,\ldots,n$ that
    \begin{equation*}
        \mu_{k,n+2}(t)=4\sin^2(\theta_k/2)\quad\text{where $\theta_k$ is defined by}\quad\theta_k=\frac{\pi k-\arctan f_t(\theta_k)}{n+1}.
    \end{equation*}
    The existence of $\theta_k$ is a consequence of tangent going from $-\infty$ to $+\infty$ over a period. Uniqueness comes from $\abs{\theta_{k+1}-\theta_k}\geq\frac{\pi}{2(n+1)}$ and the fact that $-\Delta_{n+2}+t A_{n+2}$ has exactly $n$ eigenvalues in $(0,4)$.
    
    After bounding uniformly $f_t(\theta)$
    \begin{align*}
        \sup_{\theta\in(0,\pi)}f_t(\theta)&= \sup_{\theta\in(0,\pi)}\frac{2(t+\cos\theta)\sin\theta}{t^2+2t\cos\theta+\cos(2\theta)}\\
        &\leq2\sup_{\theta\in(0,\pi)}\frac{t+\cos\theta}{t^2+2t\cos\theta+\cos(2\theta)}\\
        &=2\frac{t+\cos\theta}{t^2+2t\cos\theta+\cos(2\theta)}\Big|_{\theta=\pi}=\frac{2}{t-1},
    \end{align*}
    and using the inequality $\arctan(x)\leq x$ for $x\geq0$ we obtain
    \begin{equation*}
         \mu_{k,n+2}(t)\geq4\sin^2\left(\frac{\pi}{2(n+1)}\left[k-\frac{2}{\pi(t-1)}\right]\right),\qquad k=1,\ldots,n.\qedhere
    \end{equation*}
\end{enumerate}
\end{proof}

We now use ii) and iii) of Proposition \ref{PEV} with $t=\frac{\zeta}{2}-1\geq3$ (equivalently $\zeta\geq8$) to further bound \eqref{UB1} for $x\in(0,4)$:
\begin{align}\label{UB2}
    I_{p,\zeta}(x)&\leq p\,\EE{\#\left\{\lambda\in\sigma\left(-\Delta_{Y_1+2}+t A_{Y_1+2}\right)\,\middle| \,\lambda< x\right\}}\nonumber\\
    &=p^2\sum_{y=0}^\infty(1-p)^y \max\{k\in\N\,|\,k\leq y+2,\, \mu_{k,y+2}(t)< x\}\nonumber\\
    &=p^2\sum_{y=1}^\infty(1-p)^y \max\{k\in\N\,|\,k\leq y,\, \mu_{k,y+2}(t)< x\}\nonumber\\
    &\leq p^2\sum_{y=1}^\infty(1-p)^y \max\left\{k\in\N\,\middle|\,k\leq y,\, 4\sin^2\left(\frac{\pi}{2(y+1)}\left[k-\frac{4}{\pi(\zeta-4)}\right]\right)< x\right\}\nonumber\\
    &\leq p^2\sum_{y=1}^\infty(1-p)^y \max\left\{k\in\N\,\middle|\,k<\frac{y+1}{\beta(x)}+\frac{4}{\pi(\zeta-4)}\right\}\nonumber\\
    &=p^2\sum_{y=1}^{\infty}(1-p)^y\left(\ceil{\frac{y+1}{\beta(x)}+\frac{4}{\pi(\zeta-4)}}-1\right),\qquad x\in(0,4),\,\zeta\geq8,
\end{align}
where $\ceil{\cdot}$ is the ceiling function.
From \eqref{LB2}, \eqref{UB2}, the inequality $\ceil{\cdot}-1\leq\floor{\cdot}$, the right continuity of $\floor{\cdot}$, and the Dominated Convergence Theorem we conclude
\begin{equation*}
    \lim_{\zeta\to\infty}I_{p,\zeta}(x)=I_p^{\leq}(x),\quad x\in(0,4).
\end{equation*}

Now fix $x=\beta^{-1}(b/a)\in R$ with $a,b\in\N$, $a<b$, $\gcd(a,b)=1$ and further assume $\zeta\geq\frac{4b}{\pi}+4$. This implies
\begin{align*}
    \ceil{\frac{a(y+1)}{b}+\frac{4}{\pi(\zeta-4)}}-1&=\floor{\frac{a(y+1)}{b}}+\ceil{\left\{\frac{a(y+1)}{b}\right\}+\frac{4}{\pi(\zeta-4)}}-1\\
    &\leq \floor{\frac{a(y+1)}{b}}+\ceil{\frac{b-1}{b}+\frac{1}{b}}-1=\floor{\frac{a(y+1)}{b}},\qquad y\in\N,
\end{align*}
where we have used the fractional part $\{x\}=x-\floor{x}$. The last inequality, together with \eqref{LB2} and \eqref{UB2}, gives
\begin{equation*}
    I_{p,\zeta}(x)=I_p^\leq(x)=p^2\sum_{y=1}^{\infty}(1-p)^y\floor{\frac{a(y+1)}{b}}.
\end{equation*}
The existence of $\zeta_c(x)$ and the bound $\zeta_c(x)\leq\max\left\{8,\frac{4 b}{\pi}+4\right\}$ follow.

It only remains to prove that we can replace the infinite series by a finite sum. This is simply done by using the euclidean division $y=b n+r$ and splitting the series over all possible remainders:
\begin{align*}
    I_p^\leq(x)&=p^2\sum_{y=1}^{\infty}(1-p)^y\floor{\frac{a(y+1)}{b}}\\
    &=p^2\sum_{r=0}^{b-1}(1-p)^{r}\sum_{n=0}^\infty(1-p)^{b n}\left(a n+\floor{\frac{a (r+1)}{b}}\right)\\
    &=p^2\sum_{r=0}^{b-1}(1-p)^{r}\left(\frac{a(1-p)^b}{[1-(1-p)^b]^2}+\floor{\frac{a (r+1)}{b}}\frac{1}{1-(1-p)^b}\right)\\
    &=\frac{p^2}{1-(1-p)^b}\left(\frac{a(1-p)^b}{p}+\sum_{r=0}^{b-1}(1-p)^{r}\floor{\frac{a (r+1)}{b}}\right).
\end{align*}

\begin{acknowledgments}
This work has benefitted from support provided by the University of Strasbourg Institute for Advanced Study (USIAS), within the French national programme “Investment for the future” (IdEx-Unistra).
\end{acknowledgments}

\section*{Data Availability}
Data sharing is not applicable to this article as no new data were created or analyzed in this study.

\section*{References}
%

\begin{thebibliography}{6}%
\makeatletter
\providecommand \@ifxundefined [1]{%
 \@ifx{#1\undefined}
}%
\providecommand \@ifnum [1]{%
 \ifnum #1\expandafter \@firstoftwo
 \else \expandafter \@secondoftwo
 \fi
}%
\providecommand \@ifx [1]{%
 \ifx #1\expandafter \@firstoftwo
 \else \expandafter \@secondoftwo
 \fi
}%
\providecommand \natexlab [1]{#1}%
\providecommand \enquote  [1]{``#1''}%
\providecommand \bibnamefont  [1]{#1}%
\providecommand \bibfnamefont [1]{#1}%
\providecommand \citenamefont [1]{#1}%
\providecommand \href@noop [0]{\@secondoftwo}%
\providecommand \href [0]{\begingroup \@sanitize@url \@href}%
\providecommand \@href[1]{\@@startlink{#1}\@@href}%
\providecommand \@@href[1]{\endgroup#1\@@endlink}%
\providecommand \@sanitize@url [0]{\catcode `\\12\catcode `\$12\catcode
  `\&12\catcode `\#12\catcode `\^12\catcode `\_12\catcode `\%12\relax}%
\providecommand \@@startlink[1]{}%
\providecommand \@@endlink[0]{}%
\providecommand \url  [0]{\begingroup\@sanitize@url \@url }%
\providecommand \@url [1]{\endgroup\@href {#1}{\urlprefix }}%
\providecommand \urlprefix  [0]{URL }%
\providecommand \Eprint [0]{\href }%
\providecommand \doibase [0]{https://doi.org/}%
\providecommand \selectlanguage [0]{\@gobble}%
\providecommand \bibinfo  [0]{\@secondoftwo}%
\providecommand \bibfield  [0]{\@secondoftwo}%
\providecommand \translation [1]{[#1]}%
\providecommand \BibitemOpen [0]{}%
\providecommand \bibitemStop [0]{}%
\providecommand \bibitemNoStop [0]{.\EOS\space}%
\providecommand \EOS [0]{\spacefactor3000\relax}%
\providecommand \BibitemShut  [1]{\csname bibitem#1\endcsname}%
\let\auto@bib@innerbib\@empty
\bibitem [{\citenamefont {Delyon}\ and\ \citenamefont
  {Souillard}(1984)}]{Delyon}%
  \BibitemOpen
  \bibfield  {author} {\bibinfo {author} {\bibfnamefont {F.}~\bibnamefont
  {Delyon}}\ and\ \bibinfo {author} {\bibfnamefont {B.}~\bibnamefont
  {Souillard}},\ }\bibfield  {title} {\enquote {\bibinfo {title} {Remark on the
  continuity of the density of states of ergodic finite difference
  operators},}\ }\href {https://doi.org/10.1007/BF01209306} {\bibfield
  {journal} {\bibinfo  {journal} {Communications in Mathematical Physics}\
  }\textbf {\bibinfo {volume} {94}},\ \bibinfo {pages} {289--291} (\bibinfo
  {year} {1984})}\BibitemShut {NoStop}%
\bibitem [{\citenamefont {Carmona}, \citenamefont {Klein},\ and\ \citenamefont
  {Martinelli}(1987)}]{Carmona}%
  \BibitemOpen
  \bibfield  {author} {\bibinfo {author} {\bibfnamefont {R.}~\bibnamefont
  {Carmona}}, \bibinfo {author} {\bibfnamefont {A.}~\bibnamefont {Klein}},\
  and\ \bibinfo {author} {\bibfnamefont {F.}~\bibnamefont {Martinelli}},\
  }\bibfield  {title} {\enquote {\bibinfo {title} {Anderson localization for
  bernoulli and other singular potentials},}\ }\href
  {https://doi.org/10.1007/BF01210702} {\bibfield  {journal} {\bibinfo
  {journal} {Communications in Mathematical Physics}\ }\textbf {\bibinfo
  {volume} {108}},\ \bibinfo {pages} {41--66} (\bibinfo {year}
  {1987})}\BibitemShut {NoStop}%
\bibitem [{\citenamefont {Martinelli}\ and\ \citenamefont
  {Micheli}(1987)}]{Martinelli}%
  \BibitemOpen
  \bibfield  {author} {\bibinfo {author} {\bibfnamefont {F.}~\bibnamefont
  {Martinelli}}\ and\ \bibinfo {author} {\bibfnamefont {L.}~\bibnamefont
  {Micheli}},\ }\bibfield  {title} {\enquote {\bibinfo {title} {On the
  large-coupling-constant behavior of the liapunov exponent in a binary
  alloy},}\ }\href {https://doi.org/10.1007/BF01010397} {\bibfield  {journal}
  {\bibinfo  {journal} {Journal of statistical physics}\ }\textbf {\bibinfo
  {volume} {48}},\ \bibinfo {pages} {1--18} (\bibinfo {year}
  {1987})}\BibitemShut {NoStop}%
\bibitem [{\citenamefont {Schulz-Baldes}(2003)}]{Schulz}%
  \BibitemOpen
  \bibfield  {author} {\bibinfo {author} {\bibfnamefont {H.}~\bibnamefont
  {Schulz-Baldes}},\ }\bibfield  {title} {\enquote {\bibinfo {title} {Lifshitz
  tails for the 1{D} {B}ernoulli-{A}nderson model},}\ }\href
  {http://math-mprf.org/journal/articles/id1005/} {\bibfield  {journal}
  {\bibinfo  {journal} {Markov Processes and Related Fields}\ }\textbf
  {\bibinfo {volume} {9}},\ \bibinfo {pages} {795--802} (\bibinfo {year}
  {2003})}\BibitemShut {NoStop}%
\bibitem [{\citenamefont {Sánchez-Mendoza}(2021)}]{DSM}%
  \BibitemOpen
  \bibfield  {author} {\bibinfo {author} {\bibfnamefont {D.}~\bibnamefont
  {Sánchez-Mendoza}},\ }\bibfield  {title} {\enquote {\bibinfo {title} {Sharp
  bounds for the integrated density of states of a strongly disordered 1d
  anderson–bernoulli model},}\ }\href {https://doi.org/10.1063/5.0037707}
  {\bibfield  {journal} {\bibinfo  {journal} {Journal of Mathematical Physics}\
  }\textbf {\bibinfo {volume} {62}},\ \bibinfo {pages} {072107} (\bibinfo
  {year} {2021})}\BibitemShut {NoStop}%
\bibitem [{\citenamefont {Mason}\ and\ \citenamefont
  {Handscomb}(2003)}]{Chebyshev}%
  \BibitemOpen
  \bibfield  {author} {\bibinfo {author} {\bibfnamefont {J.~C.}\ \bibnamefont
  {Mason}}\ and\ \bibinfo {author} {\bibfnamefont {D.~C.}\ \bibnamefont
  {Handscomb}},\ }\href@noop {} {\emph {\bibinfo {title} {Chebyshev
  polynomials}}}\ (\bibinfo  {publisher} {Chapman \& Hall/CRC, Boca Raton,
  FL},\ \bibinfo {year} {2003})\ pp.\ \bibinfo {pages} {xiv+341}\BibitemShut
  {NoStop}%
\end{thebibliography}
\end{document}